\newtheorem{prop}{Proposition}
\begin{document}
	\title{Beam Acquisition and Training in Millimeter Wave Networks 
		with Narrowband Pilots
	}
\author{Hao~Zhou,
	Dongning~Guo,~\IEEEmembership{Senior Member,~IEEE,}
	and~Michael~L.~Honig,~\IEEEmembership{Fellow,~IEEE}% <-this % stops a space
	\thanks{The authors are with the Department
		of Electrical and Computer Engineering, Northwestern University, Evanston, IL 60208 USA (email: haozhou2015@u.northwestern.edu; dGuo@northwestern.edu; mh@eecs.northwestern.edu).}% <-this % stops a space
	\thanks{The work was presented in part at Asilomar 2018 \cite{hao2018initialaccess}. This work was supported in part by a gift from Futurewei Technologies.}}% <-this % stops a space

	\maketitle

	\begin{abstract}
		This paper studies initial beam acquisition in a millimeter wave network consisting of multiple access points (APs) and mobile devices. A training protocol for joint estimation of transmit and receive beams is presented with a general frame structure consisting of an initial access sub-frame followed by data transmission sub-frames. During the initial subframe, APs and mobiles sweep through a set of beams and determine the best transmit and receive beams via a handshake. All pilot signals are narrowband (tones), and the mobiles are distinguished by their assigned pilot frequencies. Both non-coherent and coherent beam estimation methods based on, respectively, power detection and maximum likelihood (ML) are presented. To avoid exchanging information about beamforming vectors between APs and mobiles, a local maximum likelihood (LML) algorithm is also presented. An efficient fast Fourier transform implementation is proposed for ML and LML to achieve high-resolution. A system-level optimization is performed in which the frame length, training time, and training bandwidth are selected to maximize a rate objective taking into account blockage and mobility. Simulation results based on a realistic network topology are presented to compare the performance of different estimation methods and {training codebooks}, and demonstrate the effectiveness of the proposed protocol.
	\end{abstract} 
	
	\begin{IEEEkeywords}
		Millimeter wave communication, initial access, narrowband signaling, training, channel estimation. 
	\end{IEEEkeywords}

	\IEEEpeerreviewmaketitle
	
	\section{Introduction}
	Fifth generation (5G) wireless communication networks are expected to provide ubiquitous connectivity and increased throughput to support the 
	%proliferation
	increasing demand for mobile data services \cite{wong2017key}. As centimeter wave (especially sub-6 GHz) bands become crowded, millimeter wave (mmWave) 
	%	band technologies are an important means to achieve the expected throughput
	bands are viewed as an important resource for satisfying projected service objectives. Recent channel measurement campaigns at mmWave frequencies have indicated that while the attenuation is relatively high, the channel typically consists of a small number of propagation paths \cite{rappaport2017overview}. Beamforming and combining with a large number of antennas, known as massive multi-input multi-output (MIMO), can therefore focus the signals along the strong paths to maintain a desired signal-to-noise ratios (SNR) at the receiver. Designing the transmit and receive beams requires channel state information (CSI), which % in general 
	is obtained through training. 
	
	In mmWave systems, channel estimation often takes the form of \emph{beam training}, which, by sending training signals, estimates the key parameters of the channel including the number of paths, spatial directions of the paths, and the path gains. Depending on whether the CSI is available \emph{a priori}, beam training can be classified into initial access \cite{desai2014initial,barati2016initial,giordani2016initial} and beam tracking \cite{zhang2016tracking,zhang2016mobile,zhu2017abp,Palacios2017Tracking}. The initial access aims to establish a communication link without prior knowledge of the channel. Because of the high attenuation at mmWave frequency bands, broadcasting omni-directional training signals for discovery of access points (AP) and channel sensing is often inadequate. Due to mobility and blockage, old paths may fade and new paths may emerge, which requires repeated training. By contrast, beam tracking assumes the existence of a communication link, and the goal is to track the deviation of the paths and refine the transmit/receive beams. 
	
	Dense deployment of mmWave APs is expected %MmWave systems are also expected to use dense AP deployment 
	to overcome blockage and improve %guarantee better 
	coverage \cite{Simic2017coverage}. With many APs and mobiles in an area, interference coordination becomes important for both beam training and data transmission. Both protocols and algorithms have been considered for training and data transmission in \cite{gonzalez2018channel,sun2019beam,alkhateeb2015limited}, with a single AP.  
	%is assumed in these works. 
	
	In this paper, we focus on the initial access problem. % and study the comprehensive design of beam acquisition and training protocol. We try to address the fundamental question of how much training is needed for an mmWave system with multiple APs and mobiles. Our main contributions are summarized as follows:
	Specifically, we consider the design of a training protocol for joint beam acquisition and tracking.  
	We try to address the question of how much training overhead is needed for an mmWave system with multiple APs and mobiles. Our main contributions are summarized as follows:
	
	%\begin{enumerate}
	1) We propose a system-level protocol and a frame structure for establishing connections between multiple APs and mobiles. % with a generic frame structure. 
	Because the directions of propagation paths are essentially identical across a typical mmWave band, we use narrowband signals (tones) for training. 
	%(narrowband) tones to send training signals. 
	The estimated beamforming and combining filters are then used for wideband data transmission. This narrowband scheme effectively avoids mutual interference and provides a high SNR. Different {training % signaling schemes 
		codebooks} (exhaustive sweeping, compressive sensing, etc.) as well as channel estimation methods can be incorporated into the protocol.
	
	%2) At the link level, 
	2) We present three channel estimation methods, namely the max power (MP) method, the maximum likelihood (ML) method, and the local maximum likelihood (LML) method. A low-complexity fast Fourier transform (FFT) implementation of the ML and LML methods is proposed to obtain a near-optimal estimate, regardless of the % signaling scheme
	{training codebook}. In particular, with exhaustive sweeping, we show that to minimize the training error no pilot repetition (per slot) is needed for ML. % to minimize the training error. 
	We compare the performance of these methods. 
	
	3) We perform a system-level analysis and determine the optimal training overhead. The overhead is determined by optimizing system parameters including the frame length, training duration, and training bandwidth. The objective is to %optimization problem is formulated to 
	maximize the long-term network throughput, accounting for %considering both 
	random blockage and link-level training error. The solution indicates that the training overhead is around $5 \%$ under typical scenarios; however, with severe frequent blockage and worse channel conditions, the overhead increases to above $10\%$, in which case selecting a training scheme with lower overhead becomes important. 
	%\end{enumerate}
	
	The paper is organized as follows. In the next section we discuss related work. %We first present some related work in section \ref{sec:related-work}. 
	In \mbox{Section \ref{sec:sysModel}}, we introduce the channel model, system model, and hybrid beamforming structures. In Section \ref{sec:access-protocol}, we present the narrowband training protocol and the frame structure. In \mbox{Section \ref{sec:est-method}}, we discuss the three beam training methods. In Section \ref{sec:sys}, we analyze the system-level performance and optimize the key parameters of the training protocol. Section \ref{sec:simu} presents simulation results and Section \ref{sec:conclusion} presents our conclusions.
	
	%Simulation results are presented in section \ref{sec:simu}. Finally, we conclude in section \ref{sec:conclusion}.
	
	%\emph{Notation:} We denote matrix with bold uppercase letters $\mathbf{A}$, vector with bold lowercase letters $\mathbf{a}$, and scaler with lowercase letter $a$. The transpose, conjugate, and conjugate transpose are denoted as $\mathbf{A}^T, \mathbf{A}^*,$ and $\mathbf{A}^H$. The real part is denoted as $\Re\{a\}$. The absolute value, 2-norm are deonted as $|a|, \|\mathbf{a}\|$. We use $\{a_{i,j}\}_{i=1:I}^{j=1:J}$ to denote the unordered set of $\{a_{1,1},...,a_{1,J},...,a_{I,J}\}$. 
	
	\section{Related Work} \label{sec:related-work}
	{Training protocols for mmWave initial access have been considered for a single AP with a single mobile in \cite{alkhateeb2015limited,hur2013millimeter,zhao2017multiuser} and multiple mobiles in \cite{Sun2018Hybridbeamforming,zhao2017tone,marzi2016compressive}. To avoid inter-user interference and pilot contamination, a tone-based training scheme is proposed in \cite{zhao2017multiuser}, where each mobile is assigned a distinct frequency tone during training. An algorithm is proposed to estimate the AoAs at APs and mobiles for a single cell with multiple users and design the analog and digital zero-forcing precoders. In \cite{zhao2017tone}, the tone-based AoA estimation method is considered for a multi-cell network, where the mobiles have a single antenna and the APs use digital beamforming. Those papers analyze the achievable rate for proposed channel estimation and beamforming algorithms. Here we consider the use of tones for user identification and acquisition in a network with a large number of APs and mobiles. Instead of fixing user association, as in \cite{zhao2017multiuser,zhao2017tone}, the mobiles acquire their APs via a sweeping protocol. We analyze the training overhead for the proposed protocol and optimize the associated parameters. 
		
	Related work on estimating the key parameters of mmWave channels, including the number of paths, spatial directions, and path gains, is presented in \cite{alkhateeb2015limited,zhao2017multiuser,zhao2017tone,zhang2016mobile,marzi2016compressive}. The MP method for estimating spatial directions has been considered in \cite{alkhateeb2015limited,zhao2017multiuser,zhao2017tone}. It is simple and robust with respect to system impairments, but the training overhead scales with increasing estimation resolution. In \cite{zhang2016mobile}, an ML-based channel estimation method is proposed, and the resulting non-linear least squares problem is solved with the Levenberg-Marquardt algorithm. The ML method is also considered in \cite{marzi2016compressive}, where random beamforming is used for training. The AoAs and AoDs are estimated at APs with uplink feedback using the Newton's method. Compared with MP estimation, ML methods are more robust with respect to noise and can obtain high-resolution estimates with less training, but require knowledge of both beamforming and combining vectors to compute the estimate. %make the estimation. In this work, w
	The LML method we present only needs to know the receiver filters. It does not require the exchange of filters between APs and mobiles. We also propose a method for computing the ML and LML estimates using FFTs.
		
	System-level analysis of mmWave networks has been studied in \cite{Ghadikolaei2015mmWave,marzi2016compressive,li2017design,li2017onthebeamformed,bai2015coverage,zhao2018multi}. In \cite{marzi2016compressive}, an SNR threshold required for successful estimation is derived using the Ziv-Zakai bound (ZZB) and the Cram\'{e}r-Rao bound (CRB). In \cite{Ghadikolaei2015mmWave}, a two-step initial access procedure is proposed, which exploits the features of omni-directional microwave and directional mmWave signals. The combination of directional beamforming and omni-directional transmission for initial access is also considered in \cite{li2017design}. In \cite{li2017onthebeamformed}, different codebooks are evaluated in terms of access latency and overhead. Design insights are also provided on the beam width and the number of simultaneous beams. In \cite{bai2015coverage}, coverage and rate performance are analyzed assuming an ideal sectored beam pattern in the absence of training error. In contrast, we maximize the long-term system throughput taking into account both link-level training error and random blockages.} 
	%, whereas training error is not considered. 
%		In that work, the actual link-level training error is not accurately characterized for a system-level analysis. By contrast, we maximize the long-term throughput considering both link-level training error and system-level random blockage.}
	
	\section{System Model}\label{sec:sysModel}
	\begin{figure}[!t]
		\centering
		\includegraphics[width=0.8\linewidth]{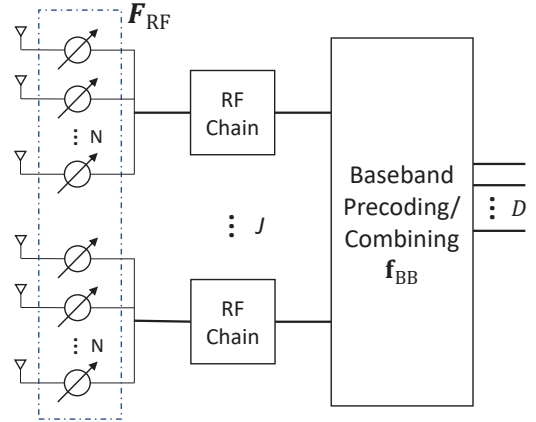}
		\caption{Example of a sub-connected hybrid transmitter/receiver structure at an AP; $D\leq J$ is the total number of data streams. The same structure applies to mobiles.}
		\label{fig:rfchain}
	\end{figure}
	Consider a mmWave system with $L$ APs and $K$ mobile devices. As illustrated in Fig. \ref{fig:rfchain}, every transceiver is equipped with $J$ radio-frequency (RF) chains. All transceivers are assumed to adopt the partially connected hybrid beamforming architecture. Each RF chain is connected to a sub-array of phased antennas through constant-modulus phase shifters \cite{heath2016overview}. We assume each antenna sub-array at an AP consists of $N$ antennas, and each sub-array at a mobile consists of $M$ antennas. 
	
	The hybrid precoding structure was introduced in \cite{zhang2005variable}. Due to the low hardware complexity, it has been extensively studied for massive MIMO and mmWave systems \cite{alkhateeb2014channel,adhikary2013jointspatial,sohrabi2017hybrid,gao2016energy,ayach2012thecapacity}. In this paper, we adopt the beam steering method \cite{ayach2012thecapacity}, where signals are transmitted by steering beams to the direction of the strongest path. Beam steering is simple and widely used in practice \cite{sadhu2017phasearray}. It has been shown that when the total number of antennas $NJ,MJ\to\infty$, beam steering is asymptotically optimal for a single data stream \cite[Corollary 4]{ayach2012thecapacity}. 
	%	As we shall see, beam steering is also asymptotically optimal in our setting.
	
	\subsection{Channel Model} \label{sec:channelModel}
	We consider a MIMO multipath channel model where the channel has a small number of propagation paths \cite{rappaport2017overview}. Due to the small form factor of mmWave antennas, for a particular AP and mobile pair, we assume that the channels across different sub-array combinations share the same directions and path loss, but with different delays \cite{heath2016overview}. The downlink virtual channel from a particular AP to a mobile is
	{\begin{align}\label{eq:channel}
		\mathbf{H} = \sum_{s=1}^{S}\alpha_{s} \mathbf{u}(\boldsymbol{\theta}_s)\mathbf{a}^H(\boldsymbol{\phi}_s),
		\end{align}
		where $S$ denotes the total number of propagation paths, $\boldsymbol{\theta}_s$ and $\boldsymbol{\phi}_s$ denote respectively the angle of arrival (AoA) and the angle of departure (AoD), $\mathbf{u}\in\mathbb{C}^{JM}$ and $\mathbf{a}\in\mathbb{C}^{JN}$ denote the antenna response functions at the mobile and the AP, respectively, and we define $\alpha_{s} =\nu\tilde{\alpha}_s$ with $\nu=\sqrt{NMJ^2/S}$ denoting the antenna gain and $\tilde{\alpha}_s\in\mathbb{C}$ denoting the path gain. The path gain $\tilde{\alpha}_s$ includes both path loss and delay with $\mathbb{E}[|\tilde{\alpha}_s|]=\bar{\alpha}_s$. }
	
	The response functions $\mathbf{u}$ and $\mathbf{a}$ depend on the geometry %layout 
	of the antenna arrays and are well defined for any particular antenna configuration. For concreteness, we focus on two commonly used array structures in practice: uniform linear array (ULA) and uniform planar array (UPA), which are illustrated in Fig. \ref{fig:antenna}. 
	
	\begin{figure}[!t]
		\centering
		\includegraphics[width=0.95\linewidth]{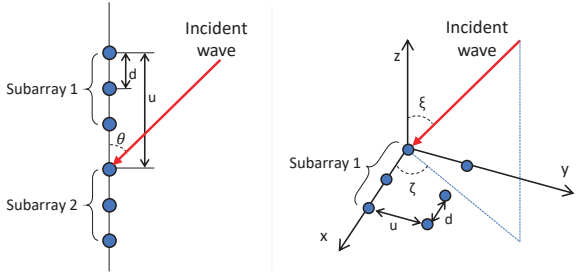}
		\caption{Configurations of two sub-arrays, each with three elements. Left: ULA. Right: UPA.}
		\label{fig:antenna}
	\end{figure}
	
	We take $\mathbf{u}(\boldsymbol{\theta})$ as an example. A similar structure applies to $\mathbf{a}(\boldsymbol{\phi})$. To simplify notation, we first define a discrete Fourier transform (DFT) vector of length $M$:
	\begin{align}\label{eq:ang_vec}
	\mathbf{e}(\vartheta;M) =\sqrt{1/M}\left[1,e^{j\vartheta},\cdots,e^{j(M-1)\vartheta}\right]^T.
	\end{align}
	
	The AoA of a ULA is fully characterized by a single angle $\theta$ representing the incident wave and the line of the antennas. The angular response of an antenna sub-array is then {$\mathbf{e}\left(\frac{2\pi d \sin\theta}{c/f_c};M\right)$} % missed c in the denominator
	and the angular response of the ULA is then
	\begin{align} \label{eq:ula}
	\begin{split}
	\mathbf{u}({\theta}) = \mathbf{e}\left(\frac{2\pi u \sin\theta}{c/f_c};J\right) \otimes \mathbf{e}\left(\frac{2\pi d \sin\theta}{c/f_c};M\right),
	\end{split}
	\end{align}
	where $\otimes$ denotes the Kronecker product,\footnote[1]{$[a_1,\dots, a_J]^T\otimes[b_1,\dots, b_M]^T = [a_1b_1,\dots, a_1b_M, \dots, a_Jb_M]^T$.} $c$ denotes the light of speed, $f_c$ denotes the carrier frequency, $d$ denotes the antenna element spacing within each sub-array, and $u$ denotes the distance between the first elements of adjacent sub-arrays. 
	
	In contrast, the AoA of a UPA is characterized by two angles $\boldsymbol{\theta}=[\zeta,\xi]$, where $\zeta$ denotes the azimuth angle and $\xi$ denotes the elevation angle. The antenna response function is
	\begin{align}\label{eq:upa}
	\begin{split}
\mathbf{u}(\boldsymbol{\theta}) = \mathbf{e}\left(\frac{\sin(\xi)\sin(\zeta)}{c/(2\pi uf_c)};J\right) \otimes \mathbf{e}\left(\frac{\sin(\xi)\cos(\zeta)}{c/(2\pi df_c)};M\right),
	\end{split}
	\end{align}
	where $d$ and $u$ denote the antenna element spacing on the x-axis and y-axis, respectively. In either case (ULA or UPA), we therefore have 	
	%	We can write \eqref{eq:ula} and \eqref{eq:upa} with a unified expression
	\begin{align}\label{eq:upa_ula}
	\begin{split}
	\mathbf{u}(\boldsymbol{\theta}) = \mathbf{e}\left(\vartheta_1;J\right) \otimes \mathbf{e}\left(\vartheta_2;M\right)  ,
	\end{split}
	\end{align}
	where $\vartheta_1$ and $\vartheta_2$ are defined as in \eqref{eq:ula} or \eqref{eq:upa} depending on whether the arrays take the form of a ULA or UPA. 
	
	\subsection{Signal Model}
	Assuming no inter-symbol interference, the time index of all signals can be suppressed. Based on \eqref{eq:channel}, the downlink channel from AP $l$ to mobile $k$ is
	\begin{align} \label{eq:channel_withIndex}
	\mathbf{H}_{l,k} = \sum_{s=1}^{S_{l,k}}\alpha_{s,l,k} \mathbf{u}(\boldsymbol{\theta}_{s,l,k})\mathbf{a}^H(\boldsymbol{\phi}_{s,l,k}).
	\end{align}
	We let AP $l$ transmit a single stream of baseband symbols. The downlink baseband received signal at mobile $k$ is
	\begin{align} \label{eq:rcvSig_dLink}
	y_k = \mathbf{w}^H_k\left(\sum_{l=1}^{L}  \mathbf{H}_{l,k}\mathbf{f}_lx_l\right) + \mathbf{w}^H_k\mathbf{n}_k,
	\end{align}
	where $\mathbf{f}_l\in\mathbb{C}^{JN}$ and $\mathbf{w}_k\in\mathbb{C}^{JM}$ denote the hybrid beamforming and combining vectors,  $x_l\in\mathbb{C}$ denotes the downlink symbol sent by AP $l$, and $\mathbf{n}_k\sim \mathcal{CN}(0,\sigma_n^2\mathbf{I}_{JM})$ denotes the additive white Gaussian noise. 
	
	We assume time division duplex (TDD) mode. In the uplink, the baseband received signal at AP $l$ is
	\begin{align} \label{eq:rcvSig_upLink}
	r_l = \mathbf{g}^H_l\left(\sum_{k=1}^{K}  \mathbf{H}^H_{l,k}\mathbf{v}_ks_k\right) + \mathbf{g}^H_l\tilde{\mathbf{n}}_l,
	\end{align}
	where $\mathbf{g}_l\in\mathbb{C}^{JN}$, $\mathbf{v}_k\in\mathbb{C}^{JM}$, $s_k\in\mathbb{C}$, and \mbox{$\tilde{\mathbf{n}}_l\sim \mathcal{CN}(0,\sigma_n^2\mathbf{I}_{JN})$} denote the hybrid beamforming vector, hybrid combining vector, uplink symbol, and additive noise, respectively. 
	
	All the hybrid beamforming/combining vectors $\mathbf{f}_l,\mathbf{g}_l,\mathbf{v}_k$, and $\mathbf{w}_k$ are the composition of a digital baseband filter with an analog precoder. For example, $\mathbf{f} = \mathbf{F}_\text{RF}\mathbf{f}_\text{BB}$, where \mbox{$\mathbf{f}_\text{BB}\in\mathbb{C}^{J}$} denotes the digital baseband precoder and the analog precoder $\mathbf{F}_\text{RF} = \text{diag}({\mathbf{a}}_1,{\mathbf{a}}_2,\dots,{\mathbf{a}}_J)$. The $i$-th diagonal block \mbox{$\mathbf{a}_i=\mathbf{a}(\boldsymbol{\phi}_i)$} corresponds to the phase shifters in the $i$-th sub-array. 
	
	%	The hybrid precoding structure was introduced in \cite{zhang2005variable}. Due to the low hardware complexity, it has been extensively studied for massive MIMO and mmWave systems \cite{alkhateeb2014channel,adhikary2013jointspatial,sohrabi2017hybrid,gao2016energy,ayach2012thecapacity}. In this paper, we adopt the beam steering method \cite{ayach2012thecapacity}, where signals are transmitted by steering beams to the direction of the strongest path. Beam steering is simple and widely used in practice \cite{sadhu2017phasearray}. It has been shown that when the total number of antennas $NJ,MJ\to\infty$, beam steering is asymptotically optimal for single user single stream channels \cite[Corollary 4]{ayach2012thecapacity}. As we shall see, beam steering is also asymptotically optimal in our setting.
	
	We adopt beam steering, as in \cite{ayach2012thecapacity}, where %In this work, we steer 
	all the sub-arrays of a mobile point towards a common direction. Specifically, the steering vectors $\mathbf{w},\mathbf{v}$ take the same form as \eqref{eq:upa_ula},
	\begin{align}\label{eq:shortHBF}
	\mathbf{w} = \sqrt{\tilde{\rho}}\mathbf{u}(\boldsymbol{\theta}) =  \sqrt{\tilde{\rho}}\mathbf{e}\left(\vartheta_1;J\right) \otimes \mathbf{e}\left(\vartheta_2;M\right),
	\end{align}
	where $\tilde{\rho}$ is a power control variable. We can also write \eqref{eq:shortHBF} as $\mathbf{w} = \mathbf{W}_\text{RF}\mathbf{w}_\text{BB}$ with the digital precoder $\mathbf{w}_\text{BB} = \sqrt{\tilde{\rho}}\mathbf{e}\left(\vartheta_1;J\right)$ and where the diagonal blocks of $\mathbf{W}_\text{RF}$ are $\bar{\mathbf{w}}_i=\mathbf{e}\left(\vartheta_2;M\right)$, \mbox{$1\leq i\leq J$.} Then each mobile has only three design parameters: angles $\vartheta_1$, $\vartheta_2$, and \mbox{power $\tilde{\rho}$.}
	
	If an AP serves only a single mobile, then it steers the beam towards that mobile, so that \eqref{eq:shortHBF} applies to its beamforming vectors $\mathbf{f}$ and $\mathbf{g}$. If an AP simultaneously serves multiple mobiles, we steer different sub-arrays towards different mobiles. Then the diagonal blocks of the analog precoder $\mathbf{a}_i=\mathbf{a}\left(\boldsymbol{\phi}_i\right)$ no longer take a common parameter $\boldsymbol{\phi}$. Also, designing the digital precoder with equal power allocation, as in \eqref{eq:shortHBF}, may not be optimal in general. Therefore, there are $2J$ design parameters: steering angles $\boldsymbol{\phi}_1,\cdots, \boldsymbol{\phi}_J$, and the digital precoder $\mathbf{u}_\text{BB}$.
	
	\section{Multiple Access Protocol}\label{sec:access-protocol}
	In this section, we propose a multiple access protocol for a mmWave network consisting of multiple APs and mobiles. The goal is to establish communication links, design beamforming and combining filters, and maintain connections with occurrences of blockage. We first present the frame structure and the protocol. We then focus on the initial access period and present a narrowband training procedure. 
	
	\subsection{Frame Structure and Multiple Access Protocol}
	\begin{figure}[!t]
		\centering
		\includegraphics[width=0.99\linewidth]{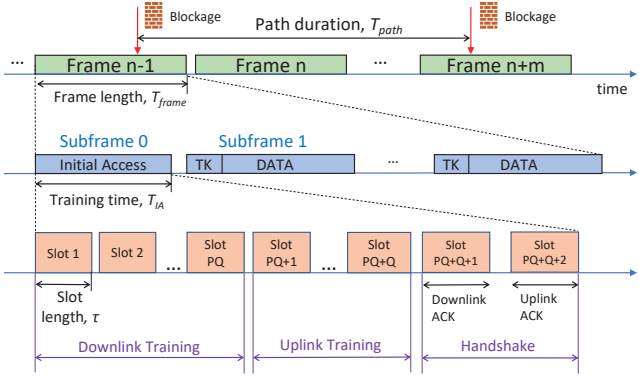}
		\caption{Frame structure. \textit{TK} indicates the tracking subframe.}
		\label{fig:frame}
	\end{figure}
	{Let time be partitioned into frames and consider a frame structure similar to the 5G new radio (NR) standard \cite{lien20175gnr} shown in Fig. \ref{fig:frame}. Each frame consists of multiple subframes, and each subframe consists of multiple time slots. A time slot is the minimum unit of time resource to be allocated, which consists of multiple symbols. There are two types of subframes: \emph{initial access} and \emph{standard}. The length of a standard subframe depends on the channel coherence time. The  length of the initial access subframe, together with the frame length and the slot duration, is optimized in Section \ref{sec:sys} to maximize the system throughput.}
%	The length of initial access subframe $T_\text{IA}$ is optimized to maximize the throughput.
	%	{The frame length is associated with the path duration as well as the delay constraints of the system, and the subframe length is associated with the channel coherence time.}
	%	There are two types of subframes: \emph{initial access} and \emph{standard}. The initial access subframe is used to establish links for newly scheduled mobiles or to recover links due to blockage, and assume the APs have no CSI {\em a priori}. {There is one initial access subframe within each frame, which is the first subframe.} The standard subframe is used for data transmission. {It has a tracking period and a data transmission period. The tracking period is used to track small angular deviations of propagation paths , and to refine the beamforming/combining vectors \cite{zhang2016mobile,zhu2017abp}.} In this work, we assume perfect beam tracking with negligible overhead and focus on the initial access subframe. 
	
	{The initial access subframe is used to establish links for newly scheduled mobiles or to recover links due to blockage, assuming no prior CSI. There is one initial access subframe within each frame, which is the first subframe. The standard subframes are used for data transmission. Each has a tracking period and a data transmission period. The tracking period is used to track small angular deviations of propagation paths and channel variations between subframes, and to refine the beamforming/combining vectors \cite{zhang2016tracking,zhang2016mobile,zhu2017abp,Palacios2017Tracking}. Since the channel paths persist across many coherence times \cite{niu2015survey,maccartney2017rapid}, the refinements are assumed to be minor and so the overhead for tracking is assumed negligible. We therefore focus on designing protocols for beam acquisition within the initial access subframe.}
	
	At the beginning of each frame, all APs and mobiles start an initial access procedure in which signals are transmitted in both downlink and uplink directions. The goal is to connect each mobile to an AP with a good channel. Mobiles that are not successfully connected, either due to bad channel realizations or limited system resources, will wait for the next frame and attempt to connect again. Successfully connected mobiles transmit and receive data during the standard subframes through the end of the frame. The multiple access protocol requires coarse frame synchronization, which means all the APs and mobiles are required to know the approximate beginning of a frame. 
	
	\subsection{Initial Access Protocol}
	%We next present a detailed design for the initial access subframe. To begin with, w
	{We assign each mobile a distinct narrow frequency band (slot) or unmodulated tone. Each mobile transmits and receives training signals only on its assigned
		narrow band. }
	% frequency slot. %
	{It is assumed that mobiles in a cell use distinct frequencies.}\footnote{{Those could be assigned via a control channel for cellular systems.}} 
	This narrowband design has the following advantages:
	%	It is assumed that mobiles in a cell use distinct frequencies with high probability (e.g., due to random assignment). 
	%	\footnote{{I think this statement and the assignment need clarification. Are the assignments hardwired in the devices? Or can the mobiles be assigned a frequency slot via another control channel? The former may create access "collisions" which might be avoided with the latter method.}}

	{1) Mutual interference is eliminated during initial access. In practice, mobiles may be nearby (e.g., in a conference hall or a stadium), and their channels may share the same AoAs at an AP. Assigning different tones to different mobiles avoids pilot collisions and the APs can acknowledge their selected mobiles using their respective tones. This exploits the large bandwidth available in mmWave bands. 
%		In addition, 
	
	2) {The transmit and receive beam directions estimated on one narrow band are suitable for data transmission on other narrow bands as well. }
	% The transmit and receive beams {directions} estimated on a narrow band are suitable for wideband data transmission as well. 
	Indeed, recent mmWave channel measurements \cite{rappaport2017overview} have shown that the directions of major propagation paths remain almost the same over
	% a very wide range
	{several GHz which spans multiple coherence bands}. Therefore, instead of probing a wide frequency band, it suffices to estimate the path directions on a narrow band. 
	%	\footnote{{Don't we have to take into account the variation in AoAs across frequency, as in the FDD work?} A: Yes, the AoAs do not vary, but the beam coefficients are different.}
	
	3) Focusing training energy on a frequency slot boosts the SNR, which can reduce training error and overhead. 
	
%	4) Narrowband beam steering is enabled with the simple hybrid beamforming design. When there is a single receiver with multiple separated transmitters, we consider a \emph{combined channel} consisting of paths from all the transmitters. This reduces to the single-transmitter case where beam steering asymptotically achieves capacity \cite{ayach2012thecapacity}. 
	
	4) The complexity of narrowband signal processing is expected to be lower than that for wideband signaling. {Each mobile must only filter out signals outside its assigned frequency slot with a bandpass filter. }}
	
	%	{6) Mobiles can simultaneously  
	%		
	%		The training overhead does not increase with the number of mobiles. By scheduling mobiles across frequency slots, the proposed protocol can simultaneously estimate all mobiles' angles. }
	
	The initial access between an AP and a connecting mobile is illustrated in Fig. \ref{fig:procedure}. There are three stages: downlink training, uplink training, and handshake. This protocol is followed by all APs and mobiles. {We assume TDD, uplink/downlink reciprocity, and that the channels are constant during the training period. The simulation indicates that the total training time is typically around 3 ms, which is shorter than the observed coherence time \cite{va2015basic} and path \mbox{duration \cite{niu2015survey}.}}
	% is divided into three parts: downlink training, uplink training, and handshake. We illustrate the protocol adopted by all APs and mobiles in Fig. \ref{fig:procedure}, where the signals from only one AP and one mobile are shown.  
	
	For concreteness, we describe the protocol with beam sweeping. The beamformers (or combiners) $\mathbf{f},\mathbf{g},\mathbf{w}$ take the form of $\mathbf{a}(\boldsymbol{\phi})$ or $\mathbf{u}(\boldsymbol{\theta})$, where all the sub-arrays are steered to the same direction. With a ULA phased array, signals or combiners can be directed to any desired azimuth angle by varying $\phi$ or $\theta$. With a UPA, both azimuth and elevation angles are changed to sweep over the 3-D space. {Note that the protocol applies to other training codebooks as well (see \cite{marzi2016compressive,hur2013millimeter,abari2016millimeter}). These variations may result in different training overhead, but do not require modifying the protocol.} 
	\begin{figure}[!t]
		\centering
		\includegraphics[width=0.95\linewidth]{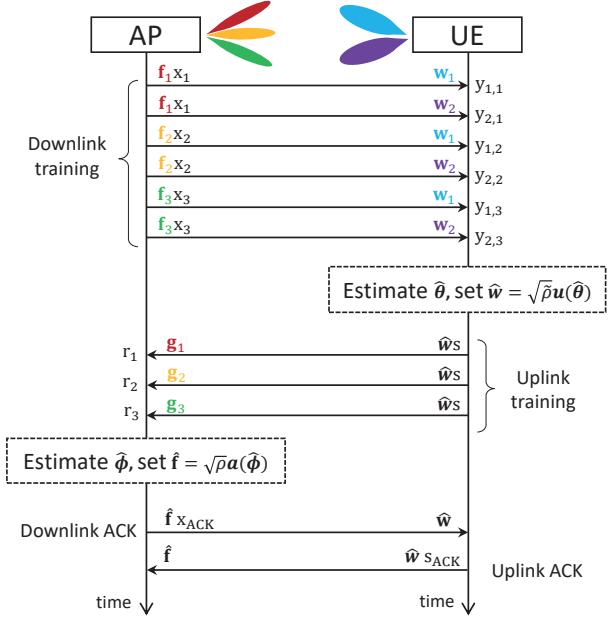}
		\caption{Example of the initial access procedure for $Q=3, P=2, I=1$. Only one AP and one mobile are shown, so their indices $l$ and $k$ are omitted.}
		\label{fig:procedure}
		%		\vspace{-1cm}
	\end{figure}
	
	The \emph{downlink training} spans $PQ$ time slots, where the APs sweep over $Q$ directions and the mobiles sweep over $P$ directions. Specifically, AP $l$ sequentially sends downlink pilots in $Q$ different directions using steering vectors $\mathbf{f}_{l,1},\mathbf{f}_{l,2},\cdots,\mathbf{f}_{l,Q}$, and mobile $k$ receives from $P$ directions with combiners $\mathbf{w}_{k,1},\mathbf{w}_{k,2},\cdots,\mathbf{w}_{k,P}$ in round-robin fashion. An AP uses the same beamforming vectors in all frequency bands, but mobile $k$ only detects the signal on its assigned frequency band $t_k$ (using a narrowband filter). In each time slot, corresponding to a particular steering vector $\mathbf{f}_{l,q}$ and combiner $\mathbf{w}_{k,p}$, the pilot symbol is repeated $I$ times, and the mobile averages the $I$ received samples to yield a sufficient statistic $y_{k,p,q}=\frac{1}{I}\sum_{i=1}^Iy_{k,p,q}[i]$. After $PQ$ time slots, mobile $k$ obtains $PQ$ samples $\{{y}_{k,p,q}\}$, which are used to estimate the direction of the strongest path $\hat{\boldsymbol{\theta}}_k$. The steering beam is then $\hat{\mathbf{w}}_k =\sqrt{\tilde{\rho}_k}\mathbf{u}(\hat{\boldsymbol{\theta}}_k) $.
	
	In the \emph{uplink training} stage, mobile $k$ uses $\hat{\mathbf{w}}_k$ as the beamformer and sends uplink signals over $Q$ time slots. Similarly, $I$ repeated pilots are sent within each time slot. Also, mobile $k$ only sends signals over its assigned frequency tone $t_k$. In the $q$-th time slot, AP $l$ combines signals at all frequency bands with the same combiner ${\mathbf{g}}_q$ and uses a bank of narrowband filters to separate the signals from different mobiles. The filtered baseband samples of \mbox{mobile $k$} are then averaged as $r_{k,l,q}$ which %(with high probability) 
	does not contain interference from other mobiles 
	%\footnote{{I prefer just "devices" or "users" or "mobiles". Maybe use "user device" or "mobile device" once, and then be consistent with "device" or "mobile".}} 
	due to frequency orthogonality. After $Q$ time slots, AP $l$ estimates the direction of the strongest path from \mbox{mobile $k$} as $\hat{\boldsymbol{\phi}}_k$ based on samples $r_{k,l,1},r_{k,l,2},\cdots,r_{k,l,Q}$. Similarly, AP $l$ can estimate the direction of other mobiles and obtains the set of estimated angles $\hat{\boldsymbol{\phi}}_{l,1},\cdots,\hat{\boldsymbol{\phi}}_{l,K}$. 
	
	The \emph{handshake} has two time slots, one for downlink acknowledgment (ACK) and one for uplink ACK. Depending on the SNRs of mobiles and system constraints (e.g., traffic condition, number of available RF chains, and physical resources), AP $l$ schedules a subset of mobiles, and selects beamformer $\hat{\mathbf{f}}_{l,k}$ over different tones based on the estimated angles of those mobiles. In the case where only mobile $k$ is scheduled, $\hat{\mathbf{f}}_l = \sqrt{\rho_l}\mathbf{a}(\boldsymbol{\hat{\phi}}_{l,k};N)$. Then AP $l$ sends a downlink ACK message $x_\text{ACK}$ to mobile $k$ on frequency $t_k$. At the same time, mobile $k$ tries to detect downlink ACK messages with combiner $\hat{\mathbf{w}}_k$ on frequency $t_k$. Upon detecting the message, mobile $k$ responds to AP $l$ by sending an uplink ACK message $s_\text{ACK}$ on frequency band $t_k$ with \mbox{beamformer $\hat{\mathbf{w}}_k$.} Since both the APs and mobiles have estimated the appropriate beamforming/combining filters, the downlink/uplink ACK messages can be sent reliably and may contain additional information for establishing the data link. 
	
%	\subsection{Assumptions and Extensions}
%	{We assume TDD, uplink/downlink reciprocity, and that the channels are constant during the training period. The simulation indicates that the total training time is typically around 3 ms, which is smaller than the coherence time \cite{va2015basic} and path duration \cite{niu2015survey}.
%	on the order of milliseconds. This is much shorter than the duration of a typical path, which is typically hundreds of milliseconds or more \cite{niu2015survey}. }
	
%	The uplink/downlink protocol admits a wide range of
	% signaling schemes 
%	training codebooks (in addition to sequential beam sweeping) such as random beamforming \cite{marzi2016compressive} and hierarchical search \cite{hur2013millimeter,abari2016millimeter}, channel estimation algorithms (max power \cite{heath2016overview} and maximum likelihood \cite{marzi2016compressive}), and hybrid beamforming designs \cite{alkhateeb2014channel,adhikary2013jointspatial,sohrabi2017hybrid,gao2016energy}. These variations may result in different training overhead, but do not require modifying the protocol. }
	
	%Although not required, for mobiles to be served at same AP with precoders of same direction ($\hat{\mathbf{f}}^{(k')}_l=\hat{\mathbf{f}}^{_k}_l$), we suggest to use assign these users different resource blocks (frequency or time). Because by serving them on the same resource block, there could be huge interference, and since mmWave network has large bandwidth, so FDM is conceivably practical. 
	
	\section{Channel Estimation}\label{sec:est-method}
	In this section, we present three methods for estimating the angles of the strongest path.
	{As previously mentioned, since the angles do not change over a wide range of frequencies, it suffices to perform the estimation on a narrow band. The estimated angles can be used to design beams for other narrow bands without causing performance loss.} We focus on a specific mobile's estimation problem with downlink training and drop the mobile index $k$. During downlink signaling, the AP explores $Q$ beams and the mobile explores $P$ beams. We assume that the training beam sequences $\mathbf{f}_1,\dots,\mathbf{f}_Q$ and $\mathbf{w}_1,\dots,\mathbf{w}_P$ have been specified according to some signaling protocol (e.g., sweeping, compressed sensing, etc). In the $(p,q)$-th time slot, AP $l$ repeats the pilot symbol $x_{l,q}$ for $I$ times to mitigate noise, and the mobile takes an average of these $I$ received samples. So, the downlink averaged
	received signal is
	\begin{align} \label{sig:est}
	\begin{split}
	y_{p,q} = \mathbf{w}_p^H \sum_{l=1}^L\sqrt{\rho_l}\mathbf{H}_{l}\mathbf{f}_{l,q} x_{l,q} + \mathbf{w}_p^H\frac{1}{I}\sum_{i=1}^I\mathbf{n}_{p,q}[i],
	\end{split}
	\end{align}
	where $\rho_l$ is the transmit power on a single tone,  $\mathbf{f}_{l,q}\in\mathbb{C}^{N}$ is the normalized beamforming vector at AP $l$ with $\|\mathbf{f}_{l,q}\|^2=1$, $x_{l,q}$ is the pilot symbol with $|x_{l,q}|^2=1$, and the noise \mbox{$\mathbf{n}_{p,q}[i]\sim \mathcal{CN}(0,\sigma_n^2\mathbf{I}_{M})$} is i.i.d. over $i,p,q$. We define an observation matrix $\mathbf{Y}\in\mathbb{C}^{P\times Q}$, with $(p,q)$-th element $y_{p,q}$. 
%	{The same training codebooks $\mathbf{w},\mathbf{f}$ and observed signals $\mathbf{Y}$ are used for the following three estimation algorithms.}	
	
	\subsection{Maximum Power (MP)}
	The MP method chooses the beam pair $\left(\hat{p},\hat{q}\right)$ that yields the highest received power among the $PQ$ combinations, and takes the combining direction to be the $p$-th receive beam: \footnote[1]{For MP, we assume that directional beams are used for training.}
	\begin{align}\label{eq:mp}
	|y_{\hat{p},\hat{q}}|^2 \geq |y_{p,q}|^2,\text{for all }p\in\{1,\dots,P\} \text{ and }q\in\{1,\dots,Q\}.
	\end{align}
	MP has often appeared in previous work \cite{alkhateeb2015limited,zhao2017multiuser}, and is used in standards %is widely used in standards 
	(including IEEE 802.11ad). % and in the literature \cite{alkhateeb2015limited,zhao2017multiuser}. 
	Power detection is robust to phase errors and frequency offset. It is usually combined with beam sweeping or hierarchical search to exploit directional transmission. To achieve high estimation resolution, it requires searching a large beam space ($PQ$ beams). % to increase estimation resolution. 
	MP in general needs many repeated pilots for each beam pair ($I>1$) to combat noise and fading. With limited training (fixed $IPQ$), there exists a tradeoff between the number of repeated pilots $I$ and the beam space size $PQ$. 
	
	\subsection{Maximum Likelihood (ML)} \label{sec:ml}
	ML methods compute the parameters that maximize the likelihood of observing the given signals. Here we make some simplifying assumptions about the channel model \eqref{eq:channel_withIndex}. Since the receiver determines a single beamforming direction, it is reasonable to assume that the received signals are transmitted from some AP $l$ through a \emph{single-path} channel with gain $\alpha$, AoA $\boldsymbol{\theta}$, and AoD $\boldsymbol{\phi}$. With i.i.d. noise $\tilde{n}_{p,q}\sim\mathcal{CN}(0,\sigma_n^2/I)$, the hypothesized received signal is then
	\begin{align} \label{eq:mismatch_model}
	\hat{y}_{p,q} =\alpha\mathbf{w}_p^H\mathbf{u}({\boldsymbol{\theta}})\mathbf{a}^H({\boldsymbol{\phi}})\mathbf{f}_{l,q} x_{l,q} + \tilde{n}_{p,q}.
	\end{align}  
	
	Conditioned on the training symbols and the parameters $(\boldsymbol{\theta},\boldsymbol{\phi},\alpha,l)$, the observed signals follow a multivariate normal distribution. Let $\mathbf{Z}(\boldsymbol{\theta},\boldsymbol{\phi},l)\in\mathbb{C}^{P\times Q}$ be a beamforming gain matrix with the $(p,q)$-the element defined as $z_{p,q}(\boldsymbol{\theta},\boldsymbol{\phi},l) = \mathbf{w}_p^H\mathbf{u}({\boldsymbol{\theta}})\mathbf{a}^H({\boldsymbol{\phi}}) \mathbf{f}_{l,q}$. With independent observations, the proposed ML method solves the problem:
	\begin{align} \label{ml:middle}
	\underset{\boldsymbol{\theta},\boldsymbol{\phi},\alpha,l}{\text{minimize}} \quad \|\alpha\mathbf{Z}(\boldsymbol{\theta},\boldsymbol{\phi},l) -\mathbf{Y}\|^2_F,
	\end{align}
	where $\|\cdot\|_F$ is the Frobenius norm. For fixed $\boldsymbol{\theta},\boldsymbol{\phi},l$, the optimal estimate of path gains \begin{align}\alpha^*({\boldsymbol{\theta}},{\boldsymbol{\phi}},l) = {\text{Tr}(\mathbf{Z}^H({\boldsymbol{\theta}},{\boldsymbol{\phi}},l)\mathbf{Y})}/{\|\mathbf{Z}({\boldsymbol{\theta}},{\boldsymbol{\phi}},l)\|_F^2}.
	\end{align}
	Then, we can rewrite \eqref{ml:middle} as
	\begin{align} \label{ml}
	\underset{\boldsymbol{\theta},\boldsymbol{\phi},l}{\text{maximize}} \quad  \frac{|\text{Tr}(\mathbf{Z}^H({\boldsymbol{\theta}},{\boldsymbol{\phi}},l)\mathbf{Y})|^2}{\|\mathbf{Z}({\boldsymbol{\theta}},{\boldsymbol{\phi}},l)\|_F^2}.
	\end{align}
	
	We note that the mismatch between the assumed model \eqref{eq:mismatch_model} and the original model \eqref{eq:channel_withIndex} is typically insignificant. In the case of multiple strong paths, the estimated AoA is the one that has the largest correlation with the received signals according to \eqref{ml}. 
	
	The least squares problem \eqref{ml} is nonlinear and challenging to solve. First, it requires a search over $l=1,2,\cdots,L$ to define the mapping $\mathbf{Z}({\boldsymbol{\theta}},{\boldsymbol{\phi}},l)$. Second, for a fixed $l$, the non-linear mapping $\mathbf{Z}({\boldsymbol{\theta}},{\boldsymbol{\phi}},l)$ is complicated, and the problem has a large number of local maxima. In \cite{zhang2016mobile}, the authors propose a solution method based on the Levenberg-Marquardt algorithm. It first uses MP to obtain an initial estimate, and then gradient descent to obtain a local optimum. However, calculation of the gradient requires a matrix inversion which is computationally expensive, and the performance largely depends on the initialization. Alternatively, in Section \ref{sec:FFT}, we present a solution method which uses FFTs to efficiently calculate \eqref{ml} and obtains near-optimal solutions with much lower computational complexity. 
	
	\subsection{Local Maximum Likelihood (LML)}
	To compute the ML estimate in Section \ref{sec:ml}, the receiver must know the transmitted beamforming vectors $\{\mathbf{f}_{l,q}\}$. We next describe the LML method, which assumes the transmitted beams are not available at the receiver. A similar approach is presented in \cite{marzi2016compressive}, where the received \mbox{matrix $\mathbf{Y}$} is sent back to the transmitter for AoD estimation. The feedback scheme \mbox{in \cite{marzi2016compressive}} cannot be directly applied here. This is because with multiple APs, we need to know to which AP to feed back. However, this is the outcome of the AP selection problem which requires the channel information being estimated.
	
	The LML method only estimates the AoA. First, consider the following {single-path}	model where the signal in the $(p,q)$-th slot is hypothesized to be transmitted through a single-path channel with gain $\beta_q$ and AoA $\boldsymbol{\theta}$:
	\begin{align} \label{eq:hch_lml}
	\hat{y}_{p,q} = \beta_q\mathbf{w}_p^H\mathbf{u}({\boldsymbol{\theta}}) x_{q} + \tilde{n}_{p,q},
	\end{align} 
	where $\beta_q=\alpha\mathbf{a}^H({\boldsymbol{\phi}})\mathbf{f}_{q}$ incorporates both path loss and beamforming gain during the $P$ time slots when the APs use the $q$-th precoders $\{\mathbf{f}_{l,q}\}_{1\leq l\leq L}$. For another period of $P$ time slots where the APs use the $q'$-th precoder, the received signals are hypothesized to be transmitted through another channel with a different gain $\beta_{q'}$ (due to the different precoder) but the same AoA $\boldsymbol{\theta}$. Conditioned on $\beta_1,\beta_2,\dots,\beta_Q$, and $\boldsymbol{\theta}$, the received signal is multivariate normal. 
	
	The LML method solves the following problem:
	\begin{align}\label{eq:lml_first}
	\underset{\boldsymbol{\theta},\beta_1,\beta_2,\dots,\beta_Q}{\text{maximize}}\quad  f_{\hat{y}_{1,1},\dots,\hat{y}_{P,Q}}(\mathbf{Y}|\boldsymbol{\theta},\beta_1,\beta_2,\dots,\beta_Q).
	\end{align}
	{There is a closed-form solution for $\beta_q$, which depends on $\boldsymbol{\theta}$, $\mathbf{w}$, and $\mathbf{Y}$. Substituting in \eqref{eq:lml_first}, we then wish to}
	\begin{align} \label{lml}
	\underset{\boldsymbol{\theta}}{\text{maximize}} \quad {\|\mathbf{b}^H({\boldsymbol{\theta}})\mathbf{Y}\|^2}/{\|\mathbf{b}({\boldsymbol{\theta}})\|^2},
	\end{align}
	where $\mathbf{b}(\boldsymbol{\theta})\in \mathbb{C}^{P}$ is a vector with the $p$-th element $b_p(\boldsymbol{\theta}) = \mathbf{w}_p^H\mathbf{u}({\boldsymbol{\theta}})$. Note that only the receiver's local combining vectors $\{\mathbf{w}_p\}$ are required to calculate \eqref{lml}. 
	
	\subsection{FFT calculation of decision statistic}\label{sec:FFT}
	In this section, we show that with uniform arrays (UPA or ULA) defined in Section \ref{sec:channelModel}, the decision statistics in \eqref{ml} and \eqref{lml} can be efficiently computed with FFTs. This is based on the observation that the antenna response functions for uniform arrays in \eqref{eq:upa_ula} are composed of DFT-type vectors. This method works for arbitrary training beams.
	
	For simplicity, we drop the AP index $l$, and write the numerator in \eqref{ml} as
	\begin{align}
	\begin{split} \label{eq:2dfft}
	\text{Tr}(\mathbf{Z}^H({\boldsymbol{\theta}},{\boldsymbol{\phi}})\mathbf{Y}) &= \sum_{p=1}^{P}\sum_{q=1}^{Q}  \mathbf{u}^H({\boldsymbol{\theta}})\mathbf{w}_p\mathbf{f}_q^H \mathbf{a}({\boldsymbol{\phi}})y_{p,q} \\
	&= \mathbf{u}^H({\boldsymbol{\theta}})\left({\sum_{p=1}^{P}\sum_{q=1}^{Q}  \mathbf{w}_p\mathbf{f}_q^Hy_{p,q}}\right) \mathbf{a}({\boldsymbol{\phi}}),
	\end{split}
	\end{align} 
	and the numerator in \eqref{lml} as
	\begin{align}\label{eq:1dfft}
	\|\mathbf{b}^H({\boldsymbol{\theta}})\mathbf{Y}\|^2 = \sum_{q=1}^Q \left| \sum_{p=1}^{P} \mathbf{u}^H({\boldsymbol{\theta}})\mathbf{w}_p y_{p,q} \right|^2= \sum_{q=1}^Q \left|\mathbf{u}^H({\boldsymbol{\theta}})\boldsymbol{\lambda}_q\right|^2,
	\end{align}
	where the $JM$-dimensional vector $\boldsymbol{\lambda}_q= \sum_{p=1}^{P}\mathbf{w}_p y_{p,q}$. 
	
	First, consider the case where ULA is used and the antenna spacing between sub-arrays is the same as the antenna spacing within a sub-array, that is, $u=Md$. Then the antenna response vector can be written as $\mathbf{u}(\theta) = \mathbf{e}(\vartheta;JM) $ with $\vartheta= 2\pi  d \sin(\theta)f_c/c$. Since the vector $\mathbf{e}(\vartheta;JM)$ is a DFT vector, each summation term $\mathbf{u}^H({\boldsymbol{\theta}})\boldsymbol{\lambda}_q$ in \eqref{eq:1dfft} is tantamount to a $JM$-point DFT of the vector $\boldsymbol{\lambda}_q$ evaluated at frequency $\vartheta$.  This motivates the use of FFT to reduce the computational complexity. By performing a $C$-point FFT on $\boldsymbol{\lambda}$, where $C$ is a power of two, we can jointly obtain the statistics at $C$ angles evenly dividing the full circle. For the case $JM<C$, we \mbox{pad $\boldsymbol{\lambda}_q$} with $C-JM$ zeros and perform a $C$-point FFT on the augmented vector. With sufficiently high quantization resolution $C$, this method guarantees a solution arbitrarily close to the global optimum. Similarly, the vectors $\mathbf{a}({\boldsymbol{\phi}})$ and $\mathbf{u}({\boldsymbol{\theta}})$ in \eqref{eq:2dfft} are also DFT vectors, and we can use a 2D-FFT to calculate the decision statistic. 
	
	Next, consider the general case where either ULAs or UPAs defined in \eqref{eq:upa_ula} are used. Let $\tilde{\mathbf{W}}_p\in\mathbb{C}^{M\times J}$ be a matrix taking every $M$ consecutive elements of $\mathbf{w}_p$ as a column, and let $\boldsymbol{\Lambda}_q = \sum_{p=1}^Py_{p,q}\tilde{\mathbf{W}}_p$. Using the fact that $\text{vec}(\mathbf{ABC}) = (\mathbf{C}^H\otimes\mathbf{A})\text{vec}(\mathbf{B})$,\footnote[1]{$\text{vec}([\mathbf{a}_1,\dots, \mathbf{a}_N]) = [\mathbf{a}_1^T,\mathbf{a}_2^T,\dots,\mathbf{a}_N^T]^T\in\mathbb{C}^{MN}$ for $\mathbf{a}_n\in\mathbb{C}^M, \forall n$.} we can rewrite each summation term $\mathbf{u}^H({\boldsymbol{\theta}})\boldsymbol{\lambda}_q$ in \eqref{eq:1dfft} as
	\begin{align} \label{eq:2dfft_general}
	\mathbf{u}^H(\boldsymbol{\theta})\boldsymbol{\lambda}_q &= \left(\mathbf{e}^H(\vartheta_1;J)\otimes \mathbf{e}^H(\vartheta_2;M)\right)\boldsymbol{\lambda}_q \\
	&= \mathbf{e}^H(\vartheta_2;M)  \boldsymbol{\Lambda}_q \mathbf{e}(\vartheta_1;J).
	\end{align} 
	Since $\mathbf{e}(\vartheta_1;J)$ and $\mathbf{e}(\vartheta_2;M)$ are DFT vectors, we can use a 2D-FFT to calculate \eqref{eq:2dfft_general}. Hence, \eqref{eq:1dfft} can be calculated with 2D-FFTs (azimuth and elevation AoAs) and \eqref{eq:2dfft} can be calculated with 4D-FFTs (azimuth and elevation AoAs and AoDs). 
	
	The denominators in \eqref{ml} and \eqref{lml} can be written as 
	\begin{align}
	\|\mathbf{Z}({\boldsymbol{\theta}},{\boldsymbol{\phi}})\|_F^2 = \sum_{p=1}^{P}\sum_{q=1}^{Q} |\mathbf{a}^H(\boldsymbol{\phi}) \mathbf{f}_q\mathbf{w}_p^H\mathbf{u}(\boldsymbol{\theta})|^2,
	\end{align}
	and
	\begin{align}
	\|\mathbf{b}(\boldsymbol{\theta})\|^2 = \sum_{p=1}^{P} |\mathbf{u}^H(\boldsymbol{\theta})\mathbf{w}_p|^2,
	\end{align}
	which can be calculated using FFTs as well. Since these are independent of the instantaneous observation $\mathbf{Y}$, each receiver can compute it offline. Note that if beam sweeping is implemented using a standard DFT codebook, then $\|\mathbf{Z}(\boldsymbol{\theta},\boldsymbol{\phi})\|_F^2$ and $\|\mathbf{b}(\boldsymbol{\theta})\|^2 $ are the same for all $(\boldsymbol{\theta},\boldsymbol{\phi})$ and $\boldsymbol{\theta}$, and can thus be removed from \eqref{ml} and \eqref{lml}. 
	
	The complexity of the FFT implementation is $O(C\log C)$, while direct calculation has complexity $O(CJM)$. The FFTs could be calculated with dedicated hardware modules \cite{son2002highspeed}.

	\subsection{Performance Analysis} \label{sec:ana}
	In this section, we present some insights into the MP and ML estimation. For simplicity and analytical tractability, we assume each AP (or mobile) uses ULA. We also assume beam sweeping for signaling, where both the beamforming and combining vectors are sampled from a DFT codebook. We take downlink signaling as an example and focus on a particular mobile. In contrast to \cite{bai2015coverage}, where an ideally sectored beam pattern is assumed, we consider a practical beam pattern having a main lobe and sidelobes. 
	
	Given a fixed total number of pilot symbols $\Omega=IPQ$, a question is whether to assign a different beam to each pilot, or to repeat pilots across a smaller set of beams. That is, the estimated directions are chosen from the $PQ$ combination of swept beams and increasing the number of sweep directions increases estimation resolution. On the other hand, repeating pilots $(I>1)$ for each beam direction suppresses noise. For MP, there is a non-trivial tradeoff between these two effects.
	
	{In contrast, we show that there is no such a tradeoff for ML in the following proposition}. 
	
	\begin{prop}
		{For ML estimation with ULA and beam sweeping, if the number of training beams satisfies $P\geq \bar{M}$ and $Q\geq \bar{N}$ where $\bar{M}$ and $\bar{N}$ are the number of antennas used for training, then the estimation error only depends on the total amount of training $\Omega=IPQ$. }
	\end{prop}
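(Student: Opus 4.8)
The plan is to show that, under the stated hypotheses, the ML estimate depends on the data only through a single aggregated matrix whose probability law is governed by $\Omega=IPQ$ alone, so that the split of $\Omega$ into $I$, $P$, $Q$ is irrelevant. First I would rewrite the objective \eqref{ml}. By \eqref{eq:2dfft} the numerator equals $|\mathbf{u}^H(\boldsymbol{\theta})\mathbf{G}\,\mathbf{a}(\boldsymbol{\phi})|^2$, where $\mathbf{G}\triangleq\sum_{p,q}\mathbf{w}_p\mathbf{f}_q^H y_{p,q}$ collects all $PQ$ observations, and the denominator factors as $\big(\sum_p|\mathbf{w}_p^H\mathbf{u}(\boldsymbol{\theta})|^2\big)\big(\sum_q|\mathbf{f}_q^H\mathbf{a}(\boldsymbol{\phi})|^2\big)$. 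The first reduction is to argue that for a DFT sweeping codebook this denominator is a constant independent of $(\boldsymbol{\theta},\boldsymbol{\phi})$, so that the ML estimate is simply $\arg\max_{\boldsymbol{\theta},\boldsymbol{\phi}}|\mathbf{u}^H(\boldsymbol{\theta})\mathbf{G}\,\mathbf{a}(\boldsymbol{\phi})|^2$, a deterministic function of $\mathbf{G}$ alone.

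Both the constancy of the denominator and the structure of $\mathbf{G}$ rest on one identity, which is exactly where the hypotheses $P\geq\bar M$ and $Q\geq\bar N$ enter. Taking the combiners to be length-$\bar M$ DFT vectors aimed at the $P$ equispaced angles $2\pi p/P$ (and the beamformers analogously of length $\bar N$ over $Q$ angles), I would evaluate the $(m,m')$ entry of $\sum_p\mathbf{w}_p\mathbf{w}_p^H$ as a geometric sum that vanishes off the diagonal precisely because $|m-m'|\leq\bar M-1<P$. This yields the tight-frame identities $\sum_p\mathbf{w}_p\mathbf{w}_p^H=\tfrac{P}{\bar M}\mathbf{I}_{\bar M}$ and $\sum_q\mathbf{f}_q\mathbf{f}_q^H=\tfrac{Q}{\bar N}\mathbf{I}_{\bar N}$; if either inequality failed, the geometric sum need not vanish and the whole argument collapses.

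Next I would substitute the single-path model into \eqref{sig:est}, writing $y_{p,q}=\alpha(\mathbf{w}_p^H\mathbf{u}_0)(\mathbf{a}_0^H\mathbf{f}_q)+n_{p,q}$ with $\mathbf{u}_0=\mathbf{u}(\boldsymbol{\theta}_0)$, $\mathbf{a}_0=\mathbf{a}(\boldsymbol{\phi}_0)$ and $n_{p,q}\sim\mathcal{CN}(0,\sigma_n^2/I)$ independent over $(p,q)$. The tight-frame identities collapse the signal part of $\mathbf{G}$ to $\tfrac{PQ}{\bar M\bar N}\,\alpha\,\mathbf{u}_0\mathbf{a}_0^H$ and make the denominator the constant $\tfrac{PQ}{\bar M\bar N}$. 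For the noise I would vectorize $\mathbf{N}=\sum_{p,q}n_{p,q}\mathbf{w}_p\mathbf{f}_q^H$ via $\mathrm{vec}(\mathbf{w}_p\mathbf{f}_q^H)=\overline{\mathbf{f}}_q\otimes\mathbf{w}_p$; independence of the $n_{p,q}$ makes its covariance factor as $\tfrac{\sigma_n^2}{I}\big(\sum_q\overline{\mathbf{f}}_q\overline{\mathbf{f}}_q^H\big)\otimes\big(\sum_p\mathbf{w}_p\mathbf{w}_p^H\big)$, which the same identities turn into the scaled white covariance $\tfrac{\sigma_n^2 PQ}{I\bar M\bar N}\mathbf{I}$. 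Rescaling $\mathbf{G}$ by $\bar M\bar N/(PQ)$ then gives the equivalent observation $\tilde{\mathbf{G}}=\alpha\mathbf{u}_0\mathbf{a}_0^H+\tilde{\mathbf{N}}$, where $\tilde{\mathbf{N}}$ is white with per-entry variance $\sigma_n^2\bar M\bar N/\Omega$, depending on the training budget only through $\Omega$.

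The conclusion is then immediate: the ML angle estimate is a fixed function of $\tilde{\mathbf{G}}$, whose law (for fixed true parameters and array sizes) is parametrized solely by $\Omega$, so the error distribution cannot depend on how $\Omega$ is split among $I$, $P$, and $Q$. I expect the main obstacle to lie on the noise side rather than the signal side: one must verify that the aggregated noise $\mathbf{N}$ is not merely zero-mean but \emph{white}, and this whiteness hinges on the Kronecker factorization together with \emph{both} tight-frame identities holding at once, i.e.\ on $P\geq\bar M$ and $Q\geq\bar N$ jointly. A secondary point to handle with care is confirming that once the denominator is constant the estimator depends on $\mathbf{Y}$ only through $\mathbf{G}$, so that the remaining degrees of freedom in $\mathbf{Y}$ are irrelevant to the error.
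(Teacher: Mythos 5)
Your proof is correct, and at its core it travels the same route as the paper's: both reduce the $PQ$ observations to the aggregated matrix $\mathbf{G}=\sum_{p,q}\mathbf{w}_p\mathbf{f}_q^H y_{p,q}$ of \eqref{eq:2dfft} and use the DFT structure of the sweeping codebook to show that the statistics of the ML decision variable depend on $(I,P,Q)$ only through $\Omega=IPQ$. The difference is in what is made explicit, and your version is the more complete one. The paper's proof invokes the hypotheses $P\ge\bar M$, $Q\ge\bar N$ only as a coverage condition for sweeping and then simply asserts the normalized marginal law \eqref{eq:ml_decision_stats}; you instead isolate the tight-frame identities $\sum_p\mathbf{w}_p\mathbf{w}_p^H=(P/\bar M)\mathbf{I}$ and $\sum_q\mathbf{f}_q\mathbf{f}_q^H=(Q/\bar N)\mathbf{I}$ as the precise place where those inequalities bite --- they make the denominator of \eqref{ml} constant in $(\boldsymbol{\theta},\boldsymbol{\phi})$ and the aggregated noise white --- which is exactly the computation hidden behind the paper's claim that $\omega$ ``normalizes the variance'' (your explicit calculation in fact reveals that the paper's $\omega=\sigma_n\sqrt{I\bar N\bar M/PQ}$ should read $\sqrt{I\bar N\bar M/PQ}/\sigma_n$). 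Your formulation also quietly repairs a rigor gap: since the ML estimate is an $\arg\max$ over $\boldsymbol{\psi}$, the error law is governed by the \emph{joint} law of the process $\{\lambda(\boldsymbol{\psi})\}$, not the per-$\boldsymbol{\psi}$ marginals stated in \eqref{eq:ml_decision_stats}; by exhibiting the estimate as a fixed deterministic function of the single sufficient statistic $\tilde{\mathbf{G}}=\alpha\mathbf{u}_0\mathbf{a}_0^H+\tilde{\mathbf{N}}$, whose entire distribution is parametrized by $\Omega$ alone, you obtain the joint-law conclusion for free. Two minor caveats: you restrict the true channel to a single path, whereas the paper keeps the multi-path, multi-AP mean in \eqref{eq:ml_decision_stats} --- but since $\mathbf{G}$ is linear in the channel, the same tight-frame collapse gives the signal part $(PQ/(\bar M\bar N))\sum_{l,s}\sqrt{\rho_l}\,\alpha_{s,l}\,\mathbf{u}(\boldsymbol{\theta}_{s,l})\mathbf{a}^H(\boldsymbol{\phi}_{s,l})$, so the extension is immediate; and your argument implicitly assumes all APs sweep the same DFT codebook so that the search over the AP index $l$ introduces no new dependence, which is consistent with the paper's setting but worth stating.
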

	\begin{proof}
		{With beam sweeping, if the AP uses $\bar{N}$ antennas for training, then it needs to sweep at least $\bar{N}$ directions to cover the whole space. Therefore, the sweep directions need to be $Q\geq \bar{N}$. Similarly, for a mobile, we need $P\geq \bar{M}$. Let $\boldsymbol{\psi}=(\boldsymbol{\theta},\boldsymbol{\phi},l)$ denote the parameters to be estimated and $\lambda(\boldsymbol{\psi}) = \omega\text{Tr}(\mathbf{Z}^H({\boldsymbol{\theta}},{\boldsymbol{\phi}},l)\mathbf{Y})$ denote the decision statistics in \eqref{ml}, where $\omega=\sigma_n\sqrt{I\bar{N}\bar{M}/PQ}$ is a constant normalizing the variance. In that case, the decision statistic $\lambda(\boldsymbol{\psi})$ satisfies}
		\begin{align}\label{eq:ml_decision_stats}
		\lambda(\boldsymbol{\psi}) \sim \mathcal{CN}\left(\sqrt{\frac{IPQ}{{\bar{N}\bar{M}}}}\sum_{l=1}^L\sum_{s=1}^S\sqrt{\gamma_{s,l}}G({\boldsymbol{\psi}},\boldsymbol{\psi}_{s,l}), 1 \right),
		\end{align}
		where $\gamma_{s,l} = \rho_l|\alpha_{s,l}|^2/\sigma_n^2$ is the received SNR when steering beams along the $s$-th path of the $l$-th AP, and  $G({\boldsymbol{\psi}},\boldsymbol{\psi}_{s,l}) = \mathbf{u}^H({\boldsymbol{\theta}})\mathbf{u}({\boldsymbol{\theta}_{s,l}})\mathbf{a}^H({\boldsymbol{\phi}_{s,l}})\mathbf{a}({\boldsymbol{\phi}})$ characterizes the beamforming gain. 
		
		For ML, the estimate $\hat{\boldsymbol{\psi}}$ is the $\boldsymbol{\psi}$ that maximizes $|\lambda(\boldsymbol{\psi})|^2$, so the estimation error is uniquely determined by $\lambda(\boldsymbol{\psi})$. {The dependence on $I,P,Q$ enters \eqref{eq:ml_decision_stats} only through the product $\Omega=IPQ$, hence the probability of selecting an incorrect beam pair depends only on this total amount of training $\Omega$.} 
	\end{proof}
	
	\begin{figure}[!t]
		\centering
		\begin{subfigure}{.95\linewidth}
			\centering
			\includegraphics[width=\linewidth]{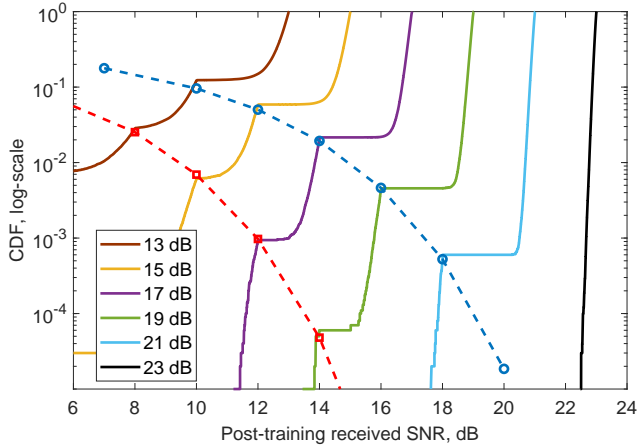}
			\label{fig:mpath}
		\end{subfigure}
		\caption{Received SNR of ML estimation after training. Dashed lines show the normal approximation for the probability of aligning with the second path (blue) and third path (red).}
		\label{fig:linkAna_simu}
	\end{figure}
	Next, we show the performance of ML with multiple paths. %effect of multiple paths for the ML method. 
	We simulate a point-to-point three-path channel where the second and third paths are 3 dB and 5 dB weaker than the main path, respectively. Fig. \ref{fig:linkAna_simu} shows the distribution of post-training SNR, which is the received SNR when steering beams to estimate directions $\hat{\gamma} =| \mathbf{u}^H(\hat{\boldsymbol{\theta}})\mathbf{H}\mathbf{a}(\hat{\boldsymbol{\phi}})  |^2$. \footnote[1]{{Note that the error in the $\mathbf{u}$-$\mathbf{v}$ space $\epsilon_\text{uv}(\theta) = |\sin{(\hat{\theta})}-\sin{(\theta)}|$ does not depend on the path direction $\theta$, and so neither does the post-training SNR $\hat{\gamma}$. However, the error in the spatial domain $\epsilon_\text{spatial}(\theta) = |\hat{\theta}-\theta|$ does change with $\theta$, and the largest error occurs at $\theta=\pm \pi/2$.}} The training time is $IPQ=NMJ^2$. Results for different SNRs are presented. 
	
	{Fig. \ref{fig:linkAna_simu} shows that at high SNRs the distribution is approximately exponential. At lower SNRs, the distribution is a piece-wise function. This is because the estimated directions are mis-matched around the second or third path, instead of the strongest path. We can approximate the probability of choosing the direction around the $s$-th path as $\Pr(\hat{\boldsymbol{\psi}}=\boldsymbol{\psi}_s) \approx \Pr(\lambda(\boldsymbol{\psi_s})>\lambda(\boldsymbol{\psi_1})) \approx Q\left(\frac{\sqrt{\gamma_\text{max}}-\sqrt{\gamma_{s}} }{\sqrt{NMJ^2/IPQ}}\right)$, where $\gamma_\text{max}$ is the received SNR when aligned to the strongest path and $Q(\cdot)$ is the standard normal cumulative density function (CDF). In Fig. \ref{fig:linkAna_simu}, this approximation is shown with dashed lines. Even for a single-path channel, when the training SNR is low, sidelobes can produce a similar effect as secondary paths. To achieve a target error probability, the training SNR should be sufficiently high to mitigate the effect of secondary paths and sidelobes.} 
	%For example, if $IPQ = NMJ^2$, then to achieve a training error less than $10^{-3}$, the training SNR should be greater than $16$ dB for a single-path channel. This is close to the threshold SNR derived in \cite{marzi2016compressive}, where the analysis is for compressive sensing with random signaling beams.
	
	\section{How much training is needed?} \label{sec:sys}
	In this section, we consider the problem of maximizing the system throughput by optimizing the key parameters of the protocol presented in Section \ref{sec:access-protocol}. Recall the frame structure in Fig. \ref{fig:frame}, where we assume the overhead due to handshaking and beam tracking is negligible. The optimization variables include the frame length $T_\text{frame}$, initial access duration $T_\text{IA}$, and downlink/uplink pilot slot duration $T_\text{slot}$ in the initial access subframe. 
	
	\subsection{Blockage Model}
	Due to the movement of a mobile and surrounding objects, its transmission path could be frequently blocked \cite{maccartney2017rapid}. When this occurs, initial access is required for discovering another path and re-establishing a connection. We consider a two-state Markov blockage model as in \cite{maccartney2017rapid}, where the probability of blockage is $\delta$. Since blockages are usually caused by nearby pedestrians or other objects, which can be modeled as a Poisson process, we assume the duration of a \mbox{path $T_\text{path}$} is an exponential random variable with \mbox{mean $1/\delta$}. We further define the data transmission time as 
	%	\begin{align}
	$T_\text{data} = \max\{\min\left\{T_\text{path}, T_\text{frame}\right\}  - T_\text{IA},0\},$
	%	\end{align}
	which is a non-negative random variable with expectation 
	\begin{align}
	\mathbb{E}[T_\text{data}] = ({e^{-\delta T_\text{IA}} - e^{-\delta T_\text{frame}}})/{\delta}.
	\end{align}
	
	\subsection{Throughput Optimization}
	We consider the problem of maximizing the long-term throughput with respect to $T_\text{IA}$, $T_\text{frame}$, and $T_\text{slot}$. A longer training time $T_\text{IA}$ reduces the training error and increases the data rate; however, this increases overhead. On the other hand, a longer frame length $T_\text{frame}$ reduces training overhead, but a transmission is more likely to be blocked within a frame, leaving the rest of the frame empty. So there exists a design tradeoff for those parameters. 
	
	We assume each time slot in the initial access subframe has a single training symbol with bandwidth $B_\text{tr}$. Adjacent slots are separated by a guard interval of length $\tau$, so the slot duration is $T_\text{slot}=\tau+1/B_\text{tr}$. Assuming beam sweeping and ML estimation, the optimization problem is
	\begin{subequations}
		\begin{flalign}
		\underset{T_\text{IA}, T_\text{frame}, B_\text{tr} \geq 0}{\text{maximize}} \quad &\mathbb{E}_{\delta,\mathbf{n},\mathbf{h}}\left[\frac{T_\text{data}}{T_\text{frame}} \log(1+\gamma) \right], \\
		\text{subject to} \quad & \tau+1/B_\text{tr} \geq T_\text{switch} \label{c1},\\
		& T_\text{IA}B_\text{tr} \geq NMJ^2 \label{c2},\\
		& T_\text{IA} \leq T_\text{frame} \leq T_\text{max}  \label{c3}, 
		\end{flalign}
	\end{subequations}
	where {$\gamma$ is the SINR for data transmission using the estimated beams,} $T_\text{switch}$ is the minimum beam switching time due to hardware implementation of phase shifters \cite{sadhu2017phasearray}, and $T_\text{max}$ is the maximum frame length given by latency requirements. The expectation is over training error (caused by noise $\mathbf{n}$), random blockage $\delta$, and channel realization $\mathbf{h}$. To cover all spatial directions with sweeping, \eqref{c2} constrains the number of training beams to be at least the number of antennas. For very-large antenna arrays (hundreds of antennas), we propose to use a subset of antennas for training, but all antennas for data transmission. However, if compressive sensing (random beamforming) is used for signaling, then \mbox{constraint \eqref{c2}} can be removed. 
	
	The received {SINR} depends on $T_\text{IA}$ and $B_\text{tr}$ and is independent of both random blockage $\delta$ and $T_\text{frame}$. Also, $T_\text{data}$ is independent of the training error. So, the expectation can be decoupled and the problem becomes
	\begin{subequations}\label{problem:ovhd}
		\begin{flalign}
		\underset{T_\text{IA}, T_\text{frame}, B_\text{tr}\geq 0}{\text{maximize}}\ &\frac{e^{-\delta T_\text{IA}} - e^{-\delta T_\text{frame}}}{\delta T_\text{frame}}  \int_{0}^{\infty} \log(1+ x)  \ dF_X(x), \\
		\text{subject to} \quad & \eqref{c1}, \eqref{c2}, \eqref{c3},
		\end{flalign}
	\end{subequations}
	where {$F_X(x)$ is the CDF of the data transmission SINR}. Since obtaining an analytical expression for $F_X(x)$ is difficult, we propose to evaluate it through the Monte Carlo method. 
	
	%	The preceding optimization problem is non-convex. The main difficulty is that optimizing over $T_\text{IA}$ requires evaluating the integral. We use coordinate descent to solve the problem. First, the optimal value for $B_\text{tr}$ is $1/(T_\text{switch}-\tau)$ for all $T_\text{IA},T_\text{frame}$. The reason is that the integral in the objective only depends on the product $T_\text{IA}B_\text{tr}$, and the first term in the objective increases when $T_\text{IA}$ decreases. Therefore, for fixed $T_\text{IA}B_\text{tr}$, we should make $B_\text{tr}$ as large as possible, which is the upper bound $1/(T_\text{switch}-\tau)$. Next, to solve for $T_\text{IA}$ and $T_\text{frame}$, we alternatively fix one of them and optimize the other. For optimizing either variable, we use gradient descent. The alternate optimization stops when the objective converges.\footnote{Although not proved, convergence is always observed in simulation.}
	
	{To solve \eqref{problem:ovhd}, we first observe that the optimal value for $B_\text{tr}$ is $1/(T_\text{switch}-\tau)$ for all $T_\text{IA},T_\text{frame}$. The reason is that the integral in the objective only depends on the product $T_\text{IA}B_\text{tr}$, and the first term in the objective increases when $T_\text{IA}$ decreases. Hence, for fixed $T_\text{IA}B_\text{tr}$, we should make $B_\text{tr}$ as large as possible, which is the upper bound $1/(T_\text{switch}-\tau)$. Next, with $T_\text{IA}$ fixed, the objective is a concave function of $T_\text{frame}$. Although there is no closed-form solution, we can solve for $T_\text{frame}$ numerically with gradient descent. Finally, because $T_\text{IA}$ is determined by the number of pilots transmitted, it is a discrete variable in a finite set and can be optimized by exhaustive search. For the examples considered, we observe that solving $T_\text{IA}$ with gradient descent also gives the optimal solution. This is because the objective appears to be concave over $T_\text{IA}$, so the problem is quasi-concave. However, concavity over $T_\text{IA}$ cannot be proved because there is no explicit expression for SINR as a function of $T_\text{IA}$.}
	
	\section{Performance Evaluation} \label{sec:simu}
	
	%	\subsection{Signaling Schemes}
	\subsection{Training codebooks}
	We first compare the link-level performance of different training codebooks. We consider a network with 3 APs and 100 mobile devices. The three APs are arranged in a triangle with inter-AP distance 250 m. The mobiles are randomly dropped within the polyhedron with the minimum distance to an AP of 15 m. We use the 3GPP Urban Micro (UMi) path loss model with the carrier frequency of 28 GHz \cite{3gpp-38-900}. We assume each AP and mobile is equipped with a ULA with $N=M=16$ antennas and $J=2$ sub-arrays. The distance between adjacent sub-arrays is the same as the antenna element spacing within a sub-array, which is half of the carrier wavelength. The bandwidth for each narrow band training signal is 250 kHz, the minimum beam switching time is 4 $\mu s$, and the slot length is 8 $\mu s$. The training powers of APs and mobiles are 20 dBm and 15 dBm, respectively. 
	
	\begin{figure}[!t]
		\centering
		\includegraphics[width=.95\linewidth]{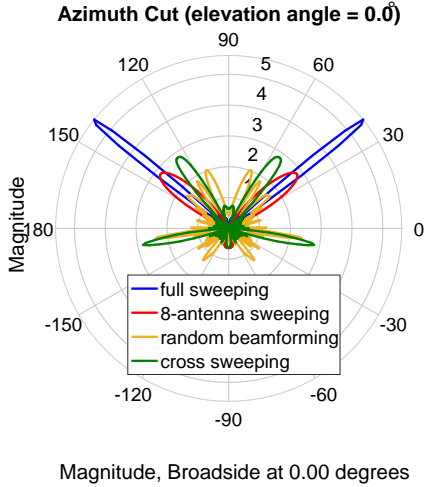}
		\caption{Example beam patterns with different signaling methods for a ULA with 16 antennas. }
		\label{fig:beamPattern}
	\end{figure}
	We simulate the following %signaling schemes
	{training codebooks} which differ in the type of beamforming/combining vectors $\mathbf{f},\mathbf{w},\mathbf{g}$ used in downlink and uplink signaling. With \emph{full sweeping}, the training signals are sent/received with DFT beams using all the antennas. With \emph{single-RF sweeping}, only one sub-array is activated for training, and as a result, the beams are wider with a DFT codebook. With \emph{adaptive sweeping}, the number of activated antennas is proportional to the number of search directions. For example, with $Q$ search directions at an AP, the first $\min(Q,NJ)$ antennas are activated. \emph{Cross sweeping} \cite{abari2016millimeter} is an alternative design of wide beams. The first half and second half of the antennas point to two orthogonal directions. \emph{Random beamforming} \cite{marzi2016compressive} is motivated by compressive sensing. The phase of each phase shifter is chosen randomly, and the resulting beam is omni-directional with random gains. For all of these schemes, the total transmission power is the same, and is equally split over all active antennas. 
	
	Fig. \ref{fig:beamPattern} shows an example of beam patterns corresponding to the different %signaling schemes
	{training codebooks}. It shows the magnitude of the inner product of the precoder $\mathbf{f}$ and an antennas response vector $\mathbf{a}(\theta)$ for $\theta$ in $[0,2\pi]$. The plot is symmetric about $\pm\pi/2$ because $\mathbf{a}(\theta)$ is determined by $\sin(\theta)$ instead of $\theta$ as in \eqref{eq:ula}. 
	
	\begin{figure}[!t]
		\centering
		\begin{subfigure}{.475\textwidth}
			\centering
			\includegraphics[width=\linewidth]{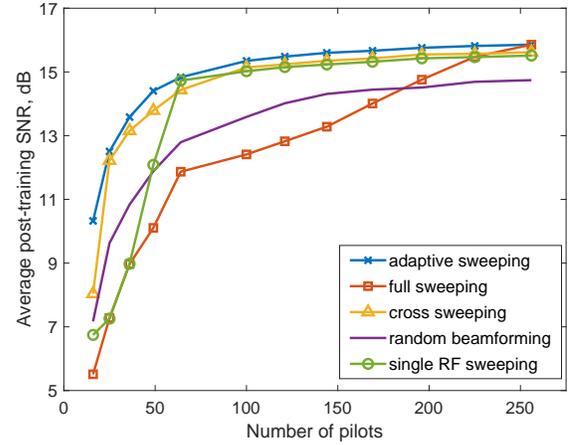}
			\caption{16 antennas at an AP.}
			\label{fig:16ant}
		\end{subfigure}
		\begin{subfigure}{.475\textwidth}
			\centering
			\includegraphics[width=\linewidth]{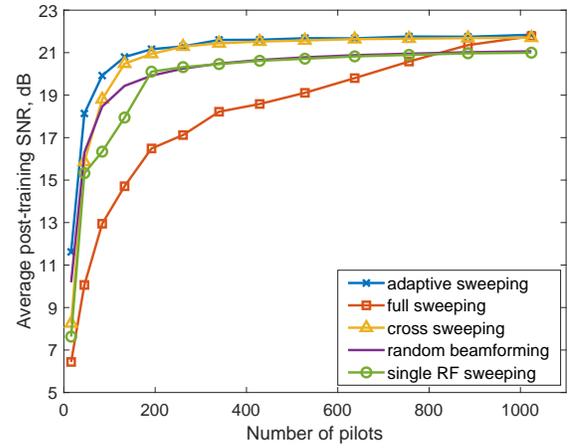}
			\caption{{64 antennas at an AP.}}
			\label{fig:64ant}
		\end{subfigure}
		\caption{Comparison of post-training SNR versus number of pilots for different beam sweeping methods. There are 16 antennas at a mobile.}
	\end{figure}
	
	The performance of different codebooks are shown in Fig. \ref{fig:16ant} and Fig. \ref{fig:64ant}. We focus on a typical mobile and use the ML method for channel estimation with an FFT size \mbox{of 64}. The post-training SNR is obtained by steering beams at both the AP and the mobile towards the estimated beamforming direction, using \emph{all} antennas, i.e., $\hat{\gamma} =| \mathbf{u}^H(\hat{\boldsymbol{\theta}})\mathbf{H}\mathbf{a}(\hat{\boldsymbol{\phi}})  |^2$. {There are $16$ antennas at a mobile. We show two examples with $16$ and $64$ antennas at an AP.} 
	
	The simulation results indicate that, among the codebooks considered, \textit{adaptive sweeping} performs best regardless of training time or antenna array size. Random beamforming generally needs more antennas and training time to achieve comparable performance. In either scenario, sweeping with all antennas does not perform well because with limited training, the narrow beams cannot cover all the spatial directions. By comparison, employing wider beams (either through single-RF sweeping or cross sweeping) improves performance when the training is limited. {Increasing the number of antennas at the APs improves the relative performance of random beamforming but does not significantly affect the other schemes. Also, the training overhead with adaptive sweeping increases only slightly with more antennas.}
	
	\subsection{Estimation Methods}
	\begin{figure}[!t]
		\centering
		\begin{subfigure}{.475\textwidth}
			\centering
			\includegraphics[width=\linewidth]{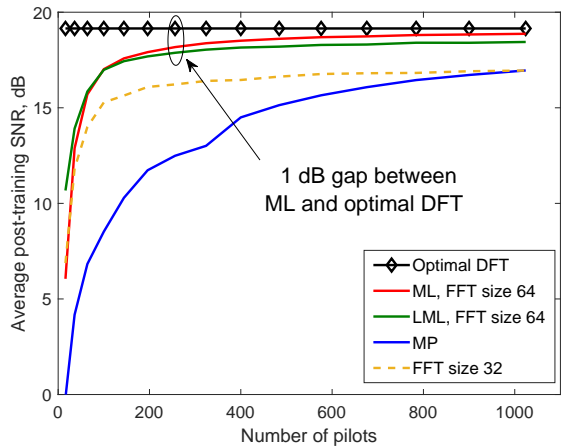}
			\caption{Power budget at each AP is 17 dBm.}
			\label{fig:bfGain_22}
		\end{subfigure}
		\begin{subfigure}{.475\textwidth}
			\centering
			\includegraphics[width=\linewidth]{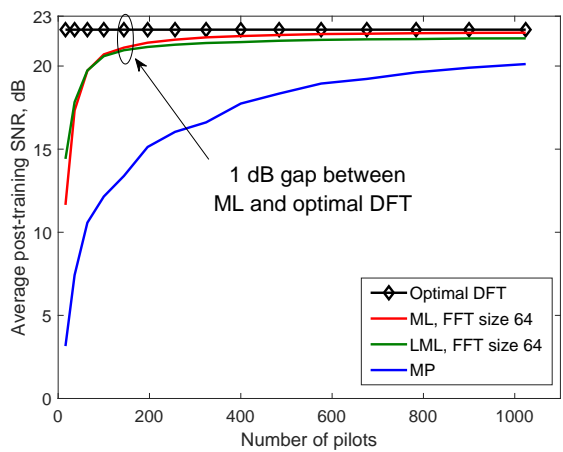}
			\caption{Power budget at each AP is 20 dBm.}
			\label{fig:bfGain_25}
		\end{subfigure}
		\caption{Comparison of post-training SNR for different estimation methods. }
	\end{figure}
	{Next, we compare the performance of different channel estimation methods discussed in Section \ref{sec:est-method} with the same training codebook. Adaptive beam sweeping is used and the number of search directions (codebook size) at an AP/mobile is $\sqrt{\Omega}$, $\Omega$ being the number of pilots.} We simulate two scenarios with AP power budgets 17 dBm and 20 dBm. The results are shown in Figs. \ref{fig:bfGain_22} and \ref{fig:bfGain_25}. The \textit{optimal DFT} algorithm takes a 2-D DFT of the channel matrix and selects the angles (AoA and AoD) with the largest magnitude. This is the maximum received SNR that can be obtained using beam steering. The ML and LML methods both perform uniformly better than the MP. With increasing number of training pilots, the performance of both methods approaches the global optimum, whereas the MP method has a performance loss due to the quantization of training beams. {Fig. \ref{fig:bfGain_22} includes ML with an FFT size of 32 and shows that ML can achieve the same estimation accuracy of MP with much less training.} There is little gap between the ML and LML results, so the analysis of ML in Section \ref{sec:ana} also gives an accurate estimate of the performance of LML. Comparing the ML curves in Figs. \ref{fig:bfGain_22} and \ref{fig:bfGain_25}, to achieve an SNR within $1$ dB of the upper bound, the required training time in Fig. \ref{fig:bfGain_22} is about twice that shown in Fig. \ref{fig:bfGain_25}. Since the power difference between the two figures is 3 dB, this is consistent with the decision statistic in \eqref{eq:ml_decision_stats} where doubling the training time effectively doubles the power. 
%	{Also, ML/LML converges with about 300 pilots, which takes less than 3 ms. So, the training time is smaller than the typical coherence time and path duration.}
	
	\begin{figure}[!t]
		\centering
		\includegraphics[width=.95\linewidth]{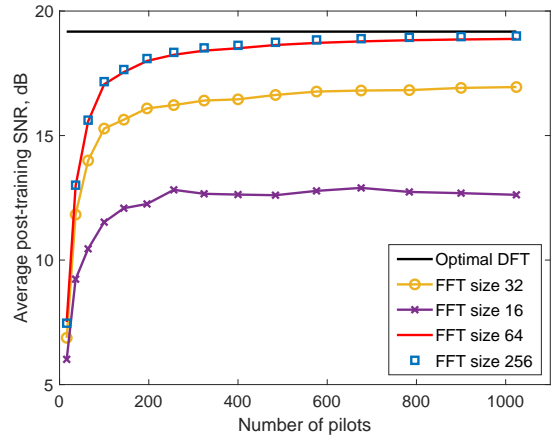}
		\caption{Averaged post-training SNR for the ML method with different FFT sizes.}
		\label{fig:fftSize}
	\end{figure}
	Fig. \ref{fig:fftSize} compares the performance of the ML method with varying FFT sizes. The results show that near-optimal performance can be obtained with a moderate FFT size of $64$. If the antenna array is very large and the beams used for data transmission are very narrow, a larger FFT size might be desired to increase resolution. 
	
	\subsection{Campus Scenario}
	In this section, we show results corresponding to a scenario in which mmWave APs are deployed on a college campus. Geographic information about buildings and roads for the Evanston campus of Northwestern University (Evanston, Illinois) are obtained from OpenStreetMap \cite{OpenStreetMap}. The map and the abstraction are shown in Fig. \ref{fig:sim_map}. We generate the urban micro (UMi) scenario in NYUSIM \cite{sun2017nyusim} with default environmental parameters. We place 10 APs in a hexagonal topology with inter-AP distance of approximately 200 m. The APs are assumed to be on the top of buildings with antenna height of 10 m. Mobiles are uniformly distributed on the roads with moving speed of 3 km/h. The antenna height at a mobile is 1.5 m. {Each AP has 32 ULA antennas and each mobile has 16 ULA antennas.}
	
	At mmWave frequencies, propagation paths can be easily blocked by trees or other pedestrians. We simulate those effects by randomly placing 2,000 small blockages in the system with size \mbox{1 m$^2$.} % The channels are generated from the NYUSIM  using the UMi channel model with default environmental parameters. 
	Based on actual geographical locations of APs and mobiles, the channels are line-of-sight (LoS) if there is no blockage (buildings or small obstacles) between the transmitter and receiver; and otherwise are non-line-of-sight (NLoS). There is a LoS path and multiple NLoS paths in a LoS channel; the NLoS channels only contains multiple NLoS paths. The AoA, AoD, and delay of a LoS path are calculated based on the geographic locations of the AP and mobile. For all NLoS paths (of both LoS channels and NLoS channels), we assume their AoAs, AoDs, and phase delay are uniformly distributed in $[0,2\pi]$ for simplicity. 
%	Since the training time is typically around 3 ms, which is smaller than coherence time \cite{va2015basic}, we assume the channel remains constant within the initial access. 
	%Since the coherence time at $30$ GHz is about $10$ ms and simulation results in previous sections indicate the training time is typically less than 10 ms, we assume the channel remains constant within the initial access. 
	
	\begin{figure}[!t]
		\centering
		\includegraphics[width=.9\linewidth]{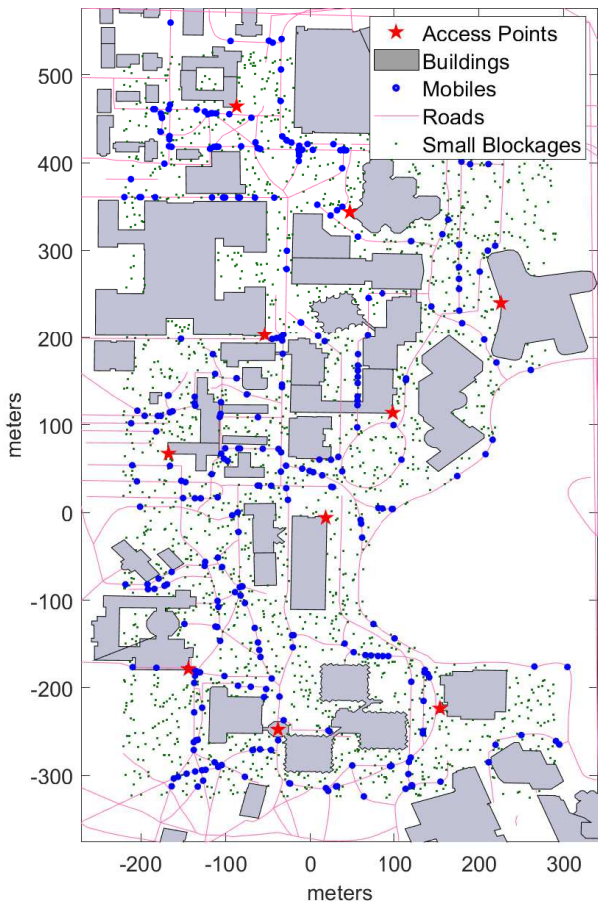}
		\caption{Campus simulation map showing AP locations and obstructions.}
		\label{fig:sim_map}
	\end{figure}
	
	For beam training, we use adaptive sweeping with LML estimation. The data transmission uses a total bandwidth of 100 MHz, which is further divided into 10 sub-bands with 10 MHz each. Beam steering is used for data transmission where the coefficients of beamforming and combining vectors are adjusted to the transmission frequency using estimated angles. Frequency division multiplexing (FDM) controls inter-user interference during data transmission. Specifically, mobiles served by the same AP are sorted according to their estimated AoDs and are assigned frequency slots in round-robin fashion. Mobiles with similar AoDs are assigned to different frequencies to reduce mutual interference. At the receiver, the maximum data receiving SINR is capped at 30 dB (in part due to quantization errors). 
	\begin{figure}[!t]
		\centering
		\begin{subfigure}{.475\textwidth}
			\centering
			\includegraphics[width=\linewidth]{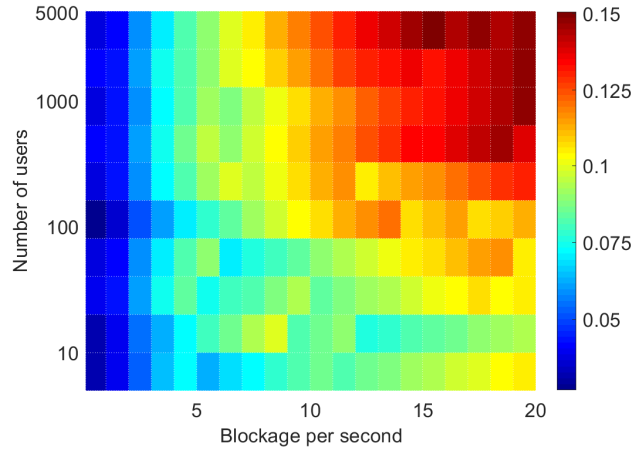}
			\caption{Maximum frame length 100 ms.}
			\label{fig:time}
		\end{subfigure}
		\begin{subfigure}{.475\textwidth}
			\centering
			\includegraphics[width=\linewidth]{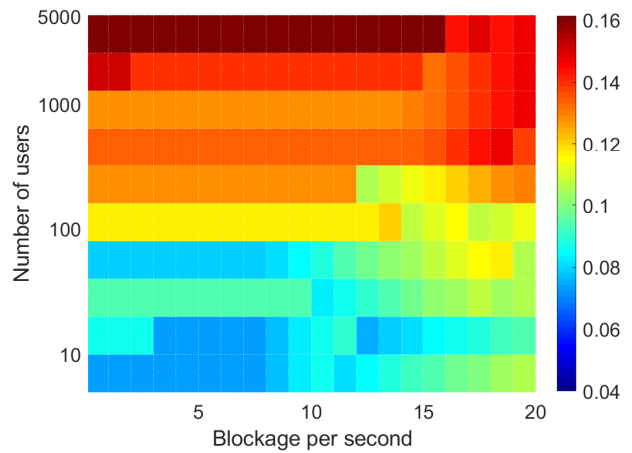}
			\caption{Maximum frame length 20 ms.}
			\label{fig:tr}
		\end{subfigure}
		\caption{Optimal training overhead.}
		\label{fig:detail}
	\end{figure}
	
	We compute the optimal training overhead for the whole system by solving problem \eqref{problem:ovhd}. Fig. \ref{fig:detail} shows how the optimized training overhead varies with the number of mobiles and the blocking rate. Since the time until an LoS path is blocked is typically more than a hundred milliseconds, which is larger than the frame length, we only consider blockage of NLoS paths. Different colors indicate different overhead levels listed in the colorbar. The Fig. \ref{fig:time} shows results for the scenario with maximum frame length 100 ms. With a moderate number of mobiles and blocking rate, the training overhead is around $5\%$; in extreme cases with very large number of mobiles and high blocking rate, it could exceed $10 \%$. In Fig. \ref{fig:tr}, we show the results with maximum frame length of $20$ ms. The training overhead, in this case, is similar to the 100-ms case when the blocking rate is high, whereas the overhead is substantially higher than the 100-ms case when the blocking rate is low. This is because the optimal frame length at low-blockage scenarios reaches the 20 ms constraint, so the training is initiated more often than necessary. A simple modification to address this issue is to let only mobiles that are blocked in the previous frame join the initial access process. This blockage occurrence can be readily detected in beam tracking phases. For mobiles that are not blocked, the APs can continue to transmit data with previously estimated beamformers. 
	
	\section{Conclusions} \label{sec:conclusion}
	In this paper, we have investigated the design and analysis of a mmWave network consisting of multiple APs and mobiles. We have proposed a narrowband training protocol along with different codebooks and estimation methods for beam acquisition. Simulation results indicate that adaptive sweeping with LML estimation achieves the best performance with moderate complexity. Campus scenario simulation shows that the training overhead with the proposed scheme is typically around $5\%$ and may exceed $10\%$ in a high-mobility environment or with high network loads.

	\textit{\bibliographystyle{ieeetr}}
	\bibliography{IEEEabrv,BeamAcquisition_JSAC}

\end{document}